\documentclass[11pt]{article}

\usepackage[margin=1in]{geometry}
\usepackage{amsmath,amssymb,amsthm}
\usepackage{parskip}
\usepackage[colorlinks=true,linkcolor=blue,citecolor=blue,urlcolor=blue]{hyperref}

\newtheorem{theorem}{Theorem}[section]
\newtheorem{lemma}[theorem]{Lemma}

\theoremstyle{remark}
\newtheorem{remark}[theorem]{Remark}

\title{Minimum Edge-Outerplanar Embeddings are Polynomial-Time Computable}

\author{
Hantao Yu\\
Columbia University\\
\texttt{hantao.yu@columbia.edu}
}
\date{}

\begin{document}

\maketitle

\begin{abstract}
We prove that the minimum edge-outerplanarity of a planar graph can be computed in polynomial time, resolving an open problem of Bentz (2009). The proof was initially produced by GPT~5.5~Pro and then verified and polished manually.
\end{abstract}

\section{Introduction}

A plane embedding of a graph is called \emph{$k$-edge-outerplanar} if the following \emph{edge-peeling process} deletes all edges in at most $k$ rounds: in each round, delete every edge lying on the outer face of the current embedding. The associated optimization problem asks for a planar embedding minimizing the number of edge-peeling rounds.

For the vertex analogue, namely ordinary $k$-outerplanarity, minimum-$k$ embeddings are known to be computable in polynomial time~\cite{BM90}. Bentz introduced $k$-edge-outerplanar graphs in the study of edge-disjoint paths and multicut problems, and asked whether the analogous statement holds for $k$-edge-outerplanarity~\cite{Bentz09}. We answer this question affirmatively for loopless planar graphs.

\begin{theorem}[Main theorem]\label{thm:main}
For every finite loopless planar graph $G$, a planar embedding of $G$ with minimum possible edge-outerplanarity can be found in polynomial time.
\end{theorem}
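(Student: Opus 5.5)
Our plan is to reduce edge-outerplanarity to ordinary (vertex) outerplanarity of an auxiliary graph, and then invoke the polynomial-time algorithm of Bienstock and Monma~\cite{BM90}. The natural candidate is a gadget graph $H$ obtained from $G$ by subdividing every edge $e$ with a new vertex $v_e$; embeddings of $G$ and of $H$ are in bijection. The edge-peeling process on $G$ should then correspond to the vertex-peeling process on $H$ restricted to the subdivision vertices: an edge $e$ lies on the outer face of the current embedding of $G$ exactly when $v_e$ lies on the outer face of the corresponding embedding of $H$. The difference between the two processes is that vertex peeling also deletes original vertices of $G$, which removes more edges than edge peeling does. So plain subdivision does not give an exact correspondence, and we will need to modify the gadget.

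\textbf{Step 1: exact characterization of edge-peeling levels.} For a fixed embedding, define the level $\ell(e)$ of an edge as the round in which it is deleted. We will prove a distance-type characterization: $\ell(e)$ equals one plus the minimum number of \emph{edges crossed} by a curve from the outer face to a point on $e$, where curves pass through faces and may pass through vertices for free. This is the edge analogue of the known fact that vertex outerplanarity is a face--vertex distance in the radial graph. We will prove it by induction on the rounds. After deleting the outer edges, the new outer face is the union of the old outer face with the faces adjacent to it across an edge. A curve that passes through a vertex corresponds to the fact that faces sharing only a vertex do not merge. Hence the edge-outerplanarity of an embedding is $1+\max_e d(e)$, where $d$ is a distance in an auxiliary "face--edge incidence" graph. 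This graph has faces and edges as nodes, and the adjacency of a face to an edge is the radial adjacency. Because passing through vertices is allowed, faces meeting at a vertex are \emph{not} adjacent.

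\textbf{Step 2: algorithmic search over embeddings.} We follow the Bienstock--Monma framework, which decides whether $G$ has an embedding with outerplanarity at most $k$ via SPQR-tree dynamic programming together with the fact that the optimal $k$ is $O(\sqrt n)$ or bounded by a polynomial. Concretely, we will reduce to the vertex version. We build $H$ by replacing each vertex $v$ of degree $d$ by a cycle of $d$ new vertices (a "wheel rim", with rotation free as in the original embedding) and subdividing each edge. Every vertex of $H$ is either an edge-vertex $v_e$ or lies on a small cycle adjacent to edge-vertices. We then prove that the radial distance in $H$ from the outer face to $v_e$ equals (up to the factor and offset we compute) the edge-crossing distance $d(e)$ in $G$. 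Passing through an original vertex becomes passing through its rim cycle, which costs a constant that we will normalize. To make the normalization exact, we instead weight vertices and use the weighted/radial-distance version of the Bienstock--Monma result: rim vertices get weight $0$ and edge vertices get weight $1$. Their algorithm is a shortest-path minimax over embeddings and adapts to $0/1$ vertex weights. The one embedding issue is that rim cycles must allow every rotation compatible with $G$. We will handle this by keeping the original vertex and its edges to $v_e$ as a star, and assigning weight $0$ to the center and weight $1$ to the $v_e$. Loops are excluded precisely because a loop's face structure breaks the star correspondence.

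\textbf{Main obstacle.} The hard step is the last one: certifying that the Bienstock--Monma dynamic program over SPQR trees, which is stated for unit-weight radial distance, still works with $0$-weight vertices. It must remain polynomial and correctly minimize $\max_e d(e)$ over all embeddings, including the choice of outer face. We would first try to re-derive their key lemma, which states that for each SPQR node and each pair of poles the set of achievable distance profiles is small and composable. With $0/1$ weights the distances remain bounded by $|E|$, so the profile tables remain polynomial. We expect the composability argument at P-nodes (permuting parallel components) and R-nodes (flipping) to go through unchanged, since it only uses the additivity of distances along separating pairs.
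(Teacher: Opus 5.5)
\textbf{There is a genuine gap: your reduction computes the wrong quantity, and the algorithm it needs is unproven.} In Step~2 you give original vertices weight $0$ and subdivision vertices weight $1$ in a radial (vertex--face incidence) distance. A radial path moves between faces \emph{through vertices}. A weight-$0$ original vertex therefore lets it pass for free between any two faces that merely share that vertex. Edge peeling does not allow this: deleting edges never merges faces that meet only at a vertex.

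Take $K_4$. In every embedding, each inner face shares a vertex with the outer triangle. So all six subdivision vertices $v_e$ get the same weighted distance, the value of an outer edge. The true layers are $1$ for the outer triangle and $2$ for the three edges at the inner vertex. Your measure therefore gives $1$ in every embedding, while $\mathrm{OPT}_{\mathrm{edge}}(K_4)=2$, since $K_4$ is not outerplanar.

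The same error is already in Step~1. The curve model ``may pass through vertices for free'' is false on this example: go through $a$ into face $dab$ and reach $da$ crossing no edges. It also contradicts your own, correct, conclusion that faces meeting only at a vertex are not adjacent. The right statement, provable by induction on rounds, is $\lambda_\Gamma(e)=1+\min_{f\in F_\Gamma(e)} d_\Gamma(f)$, where $d_\Gamma$ is distance in the face-adjacency (dual) graph. Curves must avoid vertices altogether, which matches your face--edge incidence graph, not your curve description. Step~2 then implements the curve model, not the graph.

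Suppose you repair the weights, making original vertices impassable. Your algorithm would still rest on an unproven extension of the Bienstock--Monma dynamic program to a vertex-weighted radial distance maximized over a subset of vertices. ``We expect the composability argument to go through unchanged'' is not an argument, and that DP is the entire algorithmic content.

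The missing idea is that no new DP is needed. The target $\max_e\bigl(1+\min_{f\in F_\Gamma(e)} d_\Gamma(f)\bigr)$ is exactly the minimum face-depth of an auxiliary graph, built as follows.
\begin{itemize}
\item Subdivide each edge $e$ by a vertex $s_e$, and hang a triangle from $s_e$ by a bridge.
\item The triangle's empty face is a leaf of the dual, one step deeper than the face containing it. An embedding can place it on the shallower side of $e$, which realizes the $\min$.
\item Every old face of depth $q$ is witnessed by the last edge on a shortest dual path, whose layer is $q$. So old faces never exceed the maximum marker depth.
\item Conversely, deleting the gadgets and suppressing the $s_e$ does not increase dual distances. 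The bridge-side face of each marker is incident to $e$, which gives $\operatorname{eop}\le\operatorname{depth}$.
\end{itemize}
The unweighted minimum-depth embedding algorithm of Angelini, Di~Battista and Patrignani~\cite{ADP11} then applies as a black box. Disconnected graphs are handled component by component.
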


\begin{remark}\label{rem:multigraphs}
The proof below allows parallel edges. Thus it applies to all finite loopless planar multigraphs, and in particular to all finite simple planar graphs. We do not discuss loops.
\end{remark}

We reduce the problem to the known polynomial-time problem of computing an embedding of minimum face-depth. The \emph{depth} of a plane embedding is the maximum distance, in the face-adjacency graph, from the outer face to any other face. Bienstock and Monma studied polynomial-time algorithms for minimizing several such embedding distance measures~\cite{BM90}. Angelini, Di Battista, and Patrignani later gave an $O(n^4)$-time algorithm for computing a minimum-depth embedding of an $n$-vertex planar graph~\cite{ADP11}.

\subsection{Statement on AI use}

The proof is generated using the agentic pipeline from the pipeline-math project \footnote{https://github.com/Pengbinghui/pipeline-math.git}. The pipeline involves the use of GPT 5.5 Pro and Claude Opus 4.8, where GPT 5.5 Pro acts as a solver and Claude Opus 4.8 acts as a verifier. The proof is then verified by the authors, and thus the authors are solely responsible for the correctness of the proof.

\section{Preliminaries}

Throughout this paper, we assume that all graphs are finite. Unless stated otherwise, graphs are allowed to have parallel edges but have no loops.

Let $\Gamma$ be a plane embedding of a graph $G$. Let $F(\Gamma)$ be the set of faces of $\Gamma$, and let $f_\infty$ be the outer face. The \emph{face-adjacency graph} of $\Gamma$, denoted $D_\Gamma$, is the graph whose vertices are the faces of $\Gamma$, with an edge between two faces whenever they share a primal edge. We also define $F_\Gamma(e)$ to be the set of faces in $F(\Gamma)$ that are incident to $e$. If a primal edge $e$ is a bridge, then it is incident with the same face on both sides; in this case $F_\Gamma(e)$ has one element, and the bridge contributes only a loop in the face-adjacency graph, and such loops are ignored for distances.

For a face $f \in F(\Gamma)$, define
\[
d_\Gamma(f) := \operatorname{dist}_{D_\Gamma}(f_\infty, f).
\]
The \emph{depth} of the embedding $\Gamma$ is
\[
\operatorname{depth}(\Gamma) := \max_{f \in F(\Gamma)} d_\Gamma(f).
\]
Including the outer face in this maximum is harmless, since $d_\Gamma(f_\infty) = 0$.

We note that the face-adjacency graph of any plane graph, connected or not, is connected: a generic arc from an interior point of a face $f$ to a point of $f_\infty$, chosen transversal to the edges and avoiding the vertices, crosses only edges and thus traces a walk from $f$ to $f_\infty$ in $D_\Gamma$. Hence $d_\Gamma(f)$ is finite for every face $f$, and the edge-peeling process deletes every edge after finitely many rounds.

For an edge $e \in E(G)$, let $\lambda_\Gamma(e)$ be the round in which $e$ is deleted by the edge-peeling process. Thus an edge lying on the original outer face has $\lambda_\Gamma(e) = 1$. Define
\[
\operatorname{eop}(\Gamma) := \max_{e \in E(G)} \lambda_\Gamma(e),
\]
with the convention that $\operatorname{eop}(\Gamma) = 0$ if $E(G) = \emptyset$.

Define
\[
\mathrm{OPT}_{\mathrm{edge}}(G) := \min_\Gamma \operatorname{eop}(\Gamma),
\]
where the minimum is over all plane embeddings of $G$, including the choice of the outer face. Similarly, for a planar graph $K$, define
\[
\mathrm{OPT}_{\mathrm{depth}}(K) := \min_\Delta \operatorname{depth}(\Delta),
\]
where the minimum is over all plane embeddings of $K$, including the choice of the outer face.

In the proof, we will use the following known theorem, which shows that computing a minimum-depth embedding of an $n$-vertex planar graph takes at most $O(n^4)$ time.

\begin{theorem}[Angelini, Di Battista, Patrignani~\cite{ADP11}]\label{thm:adp}
Given an $n$-vertex planar graph $K$, one can compute in $O(n^4)$ time a plane embedding of $K$ with minimum possible depth.
\end{theorem}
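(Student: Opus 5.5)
This is a cited result, so in the paper I would simply invoke \cite{ADP11}. If I had to reconstruct a proof, I would follow the standard decomposition-plus-dynamic-programming approach for embedding optimization problems, in the spirit of Bienstock and Monma~\cite{BM90}. The plan has three layers: reduce to a decision version, handle the biconnected case with an SPQR-tree, and assemble biconnected blocks along the block-cut tree. I have not checked that these layers combine to give $O(n^4)$.

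\textbf{Reduction and biconnected case.} First, I would reduce to deciding, for a fixed threshold $D \le n$ and a fixed candidate outer face, whether some embedding has depth at most $D$. Ranging over $O(n)$ values of $D$ and $O(n)$ outer-face choices, and taking the best answer, gives the optimization version. For a biconnected $K$, I would root the SPQR-tree at a node containing the prescribed outer face. Bottom-up, for each node $\mu$ with pertinent graph $K_\mu$ between poles $s,t$, I would compute a small summary of the embeddings of $K_\mu$. The summary is the pair of ``thicknesses'': the maximum face-distance from the left boundary path, and from the right boundary path, over faces inside $K_\mu$. The recurrences per node type would be:
\begin{itemize}
\item S-nodes: take a maximum over the children.
\item P-nodes: choose a permutation of the children. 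Placing a child in the middle adds the distances coming from both sides, so greedily putting the thickest children near the boundary should be optimal.
\item R-nodes: the skeleton is rigid up to a flip. Compute depths in the skeleton's dual graph, with each virtual edge weighted by its child's thickness.
\end{itemize}

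\textbf{Blocks and running time.} For a general connected $K$, I would process the block-cut tree. At each cut vertex I choose which face of the parent block each child block is nested into. A child block contributes its own depth shifted by the depth of the face hosting it, so the choice should again be greedy: nest each child block into a minimum-depth face incident to the cut vertex. For a disconnected $K$, components can be nested arbitrarily, so placing each of them in the outer face is optimal. The main obstacle is that the summary of a pertinent graph may not reduce to a single number. Inner thickness trades off against how deep the faces next to $s$ and $t$ sit, which matters for attaching blocks at the poles. I would first try to show a monotone, two-parameter summary suffices by an exchange argument; failing that, I would tabulate all $O(n)$ threshold values per pole. Either way, the count is $O(n)$ tree nodes times $O(n)$ outer faces times $O(n)$ thresholds times $O(n)$ work per node, giving the claimed $O(n^4)$.
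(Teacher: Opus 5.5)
Invoking \cite{ADP11} is exactly what the paper does: it imports this theorem without proof, so your proposal matches the paper. Your from-scratch reconstruction is not needed, and as written it is not a proof. The two-number thickness summary and the greedy P-node ordering are unjustified: a thick child placed near the boundary keeps its own faces shallow but makes every face behind it pay to cross it, so one needs the full profile of interior faces against both boundary depths. The trade-off between a block's own depth and the depths of faces at its cut vertices, which you flag yourself, is the real content of \cite{ADP11}, so your $O(n^4)$ count is an assertion rather than a derivation.
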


\section{Edge peeling in a fixed embedding}

We first prove the exact relationship between edge-peeling rounds and face-depths in a fixed embedding.

\begin{lemma}[Face reachability after edge deletions]\label{lem:reachability}
Let $\Gamma$ be a plane embedding of a graph $G$, and let $S \subseteq E(G)$. Delete the edges of $S$, while retaining all vertices. Then the current face containing the original outer face is obtained by merging exactly those original faces that are reachable from $f_\infty$ in the subgraph of $D_\Gamma$ whose edges are dual to the primal edges in $S$.
\end{lemma}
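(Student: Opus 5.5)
The plan is to work with the open complement of the drawing and show that deleting the edges of $S$ merges original faces exactly along the dual edges of $S$. Write $\Gamma$ also for the point set of the drawing, so the original faces are the connected components of $\mathbb{R}^2 \setminus \Gamma$. After deleting $S$ (but keeping all vertices), the new drawing $\Gamma'$ has point set $\Gamma$ minus the relative interiors of the edges in $S$. Hence
\[
\mathbb{R}^2 \setminus \Gamma' = \bigcup_{f \in F(\Gamma)} f \;\cup\; \bigcup_{e \in S} \mathring{e},
\]
where $\mathring{e}$ denotes the open arc of $e$ without its endpoints. The current faces are the connected components of this set. I will define an auxiliary graph $H$ on $F(\Gamma)$ whose edges are the dual edges of $S$, that is, $f \sim g$ whenever some $e \in S$ has $F_\Gamma(e) = \{f,g\}$; loops from bridges are harmless and are ignored. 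The claim is then that the components of $\mathbb{R}^2 \setminus \Gamma'$ correspond bijectively to the components of $H$, each being the union of its faces together with the open arcs of $S$ between them. Applying this to the component of $H$ containing $f_\infty$ gives the lemma.

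\textbf{Step 1: two faces joined by an edge of $H$ lie in the same current face.} Take $e \in S$ with $F_\Gamma(e) = \{f,g\}$. A point $p \in \mathring{e}$ has a small disk neighbourhood, avoiding all other edges and all vertices, that is split by $e$ into two half-disks. One half-disk lies in $f$ and the other in $g$. After $e$ is deleted, this disk lies entirely in $\mathbb{R}^2 \setminus \Gamma'$ and connects $f$ to $g$. By induction along paths in $H$, each $H$-component is contained in a single component of $\mathbb{R}^2 \setminus \Gamma'$. Each $\mathring{e}$ for $e \in S$ is attached to the faces incident to $e$, so it also belongs to that component.

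\textbf{Step 2: the converse, that nothing else merges.} Let $U$ be the union of the faces in one $H$-component $C$ together with the arcs $\mathring{e}$ for all $e \in S$ incident to faces of $C$. I will show that $U$ is open in $\mathbb{R}^2 \setminus \Gamma'$, and likewise for every other $H$-component, so that these sets partition $\mathbb{R}^2 \setminus \Gamma'$ into disjoint open pieces. Connectedness then forces each component of $\mathbb{R}^2 \setminus \Gamma'$ to be exactly one such piece.

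Openness is clear at points of faces. At a point of $\mathring{e}$ with $e \in S$, the local disk from Step 1 meets only $\mathring{e}$ and the faces incident to $e$, and both of those faces lie in $C$ by the definition of $H$. Points of $\Gamma'$ itself, namely vertices and the edges not in $S$, are excluded from the set, so no other points need checking. Deleted edges whose endpoints are shared with other edges cause no trouble, since those endpoints are vertices and therefore remain in $\Gamma'$.

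\textbf{Main obstacle.} The delicate point is the local structure used in both steps: every interior point of an edge has a neighbourhood meeting only that edge and its one or two incident faces, with each side lying in exactly the face listed in $F_\Gamma(e)$. This holds for a tame embedding (for instance, polygonal arcs, which can be assumed without loss of generality). I will state that assumption explicitly and then invoke the standard fact that each side of an edge lies in a single face. Finally, the current outer face is the component containing the unbounded region, which lies in $f_\infty$. It is therefore the union of the faces reachable from $f_\infty$ in $H$, which is the subgraph of $D_\Gamma$ whose edges are dual to the edges of $S$.
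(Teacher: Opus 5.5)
Your proof is correct and follows essentially the same idea as the paper: faces merge exactly across interior points of deleted edges, and nothing else merges because the retained vertices and undeleted edges block every other passage. Your partition of $\mathbb{R}^2 \setminus \Gamma'$ into disjoint open connected pieces, one for each component of the dual subgraph, is a more rigorous version of the paper's informal ``equivalence classes generated by gluing'' argument. Because the vertices stay in $\Gamma'$, it also makes the paper's separate cutvertex remark unnecessary.
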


\begin{proof}
Deleting a non-bridge edge glues the two face-regions incident with that edge along the open arc formerly occupied by the edge. Deleting a bridge removes a slit lying in a single face-region and does not merge two distinct original faces. Therefore, after deleting all edges of $S$, the resulting face-regions are precisely the equivalence classes generated by gluing original faces across deleted non-bridge edges.

This is exactly connectivity in the subgraph of $D_\Gamma$ whose edges correspond to the deleted primal edges. Possible identifications through a cutvertex do not create additional equivalences: locally around a vertex, moving from one original face sector to another requires passing across an incident edge sector, and such a passage is available precisely when the corresponding edge has been deleted. Hence the original faces merged with $f_\infty$ are exactly the faces reachable from $f_\infty$ through dual edges corresponding to $S$.
\end{proof}

\begin{lemma}[Fixed-embedding edge layer formula]\label{lem:layer}
Let $\Gamma$ be a plane embedding of a loopless graph $G$. Then, for every edge $e \in E(G)$,
\[
\lambda_\Gamma(e) = 1 + \min_{f \in F_\Gamma(e)} d_\Gamma(f).
\]
Consequently,
\[
\operatorname{eop}(\Gamma) = \max_{e \in E(G)} \Bigl( 1 + \min_{f \in F_\Gamma(e)} d_\Gamma(f) \Bigr).
\]
\end{lemma}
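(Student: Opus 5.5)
We prove by induction on $r \ge 0$ that the set $S_r$ of edges deleted in the first $r$ rounds equals $\{e : \min_{f \in F_\Gamma(e)} d_\Gamma(f) \le r-1\}$. Writing $B_r := \{f \in F(\Gamma) : d_\Gamma(f) \le r\}$ for the ball of radius $r$ around $f_\infty$ in $D_\Gamma$, an equivalent statement is: $S_r$ is the set of edges incident with some face in $B_{r-1}$. The key auxiliary claim is that after deleting $S_r$, the current outer face is obtained by merging exactly the original faces in $B_r$. Granting the induction, $\lambda_\Gamma(e)$ is the least $r$ with $e \in S_r$, which is $1+\min_{f\in F_\Gamma(e)} d_\Gamma(f)$. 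The formula for $\operatorname{eop}(\Gamma)$ then follows by taking the maximum over $e$; if $E(G)=\emptyset$, both sides are $0$ under the stated convention.

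The base case $r=0$ is immediate: $S_0=\emptyset$, $B_0=\{f_\infty\}$, and the current outer face is $f_\infty$ itself.

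For the inductive step I will show two things.

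\emph{First}, the current outer face after deleting $S_r$ is the merger of $B_r$. By Lemma~\ref{lem:reachability}, that face is the merger of the faces reachable from $f_\infty$ in the subgraph $D_\Gamma[S_r]$, whose edges are dual to the edges of $S_r$.
\begin{itemize}
\item Every face in $B_r$ is reachable. Take a shortest path $f_\infty=g_0,g_1,\dots,g_s$ with $s \le r$. Each step $g_{i-1}g_i$ is dual to a primal edge incident with $g_{i-1}\in B_{r-1}$, so that edge lies in $S_r$.
\item Conversely, every reachable face lies in $B_r$. A dual edge coming from $S_r$ has an endpoint in $B_{r-1}$, so both its endpoints lie in $B_r$. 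Hence no walk using only such dual edges leaves $B_r$; an induction along the walk makes this precise.
\end{itemize}
Bridges contribute only dual loops, which Lemma~\ref{lem:reachability} correctly ignores.

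\emph{Second}, the edges removed in round $r+1$ are exactly the surviving edges lying on the boundary of this merged region. A surviving edge $e \notin S_r$ lies on the current outer face iff one of its two sides is an original face merged into that region, i.e.\ some $f \in F_\Gamma(e)$ is in $B_r$. Therefore $S_{r+1}=\{e : \text{some } f\in F_\Gamma(e) \text{ lies in } B_r\}$, which completes the induction.

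The main obstacle is justifying this last step rigorously. The claim is that a surviving edge lies on the boundary of the current outer face precisely when one of its original incident faces was merged into it. The point is that the two sides of a surviving edge are (pieces of) its original incident faces, since deleting other edges only enlarges regions by unions of original faces and never cuts one. I would argue this directly from the same local gluing picture used in the proof of Lemma~\ref{lem:reachability}: each current face is a union of original faces glued along deleted edges, and a surviving edge's sides lie in the current faces containing its original sides. The bridge case is handled the same way, since both sides of a bridge lie in the same face. Looplessness is what guarantees that $F_\Gamma(e)$ is well defined as the set of one or two faces on the sides of $e$.
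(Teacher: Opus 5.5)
Your proposal is correct and follows essentially the same route as the paper. Both argue by induction that after $r$ rounds the outer region is the union of the original faces in $B_r$, obtained via Lemma~\ref{lem:reachability}, with the edges deleted in round $r+1$ being exactly the surviving edges having an incident face in $B_r$. The differences are only in presentation: you carry $S_r$ as the explicit invariant and apply Lemma~\ref{lem:reachability} to all of $S_r$ from $f_\infty$ (rather than to the newly deleted edges from $B_r$), and you explicitly flag the local-gluing fact that the paper simply asserts, namely that a surviving edge borders the current outer region iff one of its original sides has been merged into it.
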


\begin{proof}
Let $S_r$ be the set of edges deleted after the first $r$ peeling rounds, with $S_0 = \emptyset$. Let
\[
B_r := \{ f \in F(\Gamma) : d_\Gamma(f) \le r \}.
\]
We prove by induction on $r$ that, after $r$ rounds, the current outer face is obtained by merging exactly the original faces in $B_r$.

For $r = 0$, this is immediate, since before any edge is deleted the outer region is exactly the original outer face $f_\infty$, and $B_0 = \{f_\infty\}$.

Assume the statement holds after $r$ rounds. At the beginning of round $r+1$, an undeleted edge $e$ lies on the current outer face if and only if at least one original face incident with $e$ lies in the current outer region. By the induction hypothesis, this is equivalent to
\[
\min_{f \in F_\Gamma(e)} d_\Gamma(f) \le r.
\]
Thus the edges deleted in round $r+1$ are exactly the previously undeleted edges satisfying this inequality.

After deleting these edges, Lemma~\ref{lem:reachability} shows that the new outer region consists of the original faces reachable from $B_r$ by crossing one of the newly deleted edges. Hence it contains every face at distance at most $r+1$ from $f_\infty$.

Conversely, if an edge is deleted in round $r+1$, then it has an incident face of depth at most $r$. Crossing such an edge can only reach a face of depth at most $r+1$, because adjacent faces in $D_\Gamma$ have depths differing by at most one. Therefore no original face of depth greater than $r+1$ enters the outer region after round $r+1$.

Thus, after $r+1$ rounds, the current outer face is obtained by merging exactly the original faces in $B_{r+1}$. The induction is complete.

It follows that an edge $e$ is first deleted in the unique round $r+1$ for which at least one incident face has depth at most $r$. Equivalently,
\[
\lambda_\Gamma(e) = 1 + \min_{f \in F_\Gamma(e)} d_\Gamma(f).
\]
Taking the maximum over all edges gives the formula for $\operatorname{eop}(\Gamma)$.
\end{proof}

\section{The auxiliary graph}

Let $G$ be a connected loopless planar graph with at least one edge. We construct an auxiliary graph $H = H(G)$ as follows.

For every edge $e = uv \in E(G)$, first subdivide $e$, replacing it by the path
\[
u - s_e - v.
\]
Then attach a pendant triangle to $s_e$. That is, add three new vertices
\[
t_e, \quad a_e, \quad b_e
\]
and four new edges
\[
s_e t_e, \quad t_e a_e, \quad a_e b_e, \quad b_e t_e.
\]
The cycle
\[
C_e = t_e a_e b_e t_e
\]
is called the \emph{marker triangle} for $e$. The edge $s_e t_e$ is a bridge attaching this marker triangle to the subdivision vertex $s_e$.

The construction is linear:
\[
|V(H)| = |V(G)| + 4|E(G)|, \qquad |E(H)| = 6|E(G)|.
\]
Moreover, if $G$ is planar, then $H$ is planar. If $G$ is loopless, then $H$ is simple even when $G$ has parallel edges.

\section{From edge layers to marker depths}

We first prove that every embedding of $G$ gives an embedding of $H$ of the same depth as the edge-outerplanarity of the embedding of $G$.

\begin{lemma}\label{lem:forward}
Let $G$ be connected and loopless, with at least one edge, and let $H = H(G)$. For every plane embedding $\Gamma$ of $G$, there exists a plane embedding $\Delta$ of $H$ such that
\[
\operatorname{depth}(\Delta) = \operatorname{eop}(\Gamma).
\]
Consequently,
\[
\mathrm{OPT}_{\mathrm{depth}}(H) \le \mathrm{OPT}_{\mathrm{edge}}(G).
\]
\end{lemma}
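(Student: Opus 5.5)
The plan is to build $\Delta$ directly from $\Gamma$ and then compute its face-depths explicitly, using Lemma~\ref{lem:layer} to compare them with $\operatorname{eop}(\Gamma)$.

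\textbf{Construction.} For each edge $e=uv$, subdivide it in place by $s_e$. Choose a face $f_e \in F_\Gamma(e)$ attaining $\min_{f\in F_\Gamma(e)} d_\Gamma(f)$; if $e$ is a bridge this is its unique incident face. Draw the bridge $s_et_e$ and the marker triangle $C_e$ inside $f_e$, in a small disk near $s_e$ on the side of $f_e$. The outer face of $\Delta$ is the region containing the original outer face $f_\infty$. Since distinct pendant gadgets sit in disjoint small disks, this is a plane embedding of $H$.

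\textbf{Faces and distances of $\Delta$.} The faces of $\Delta$ are of two kinds. First, one face $\tilde f$ for each face $f$ of $\Gamma$: subdivision does not change faces, and a pendant bridge together with a triangle does not disconnect the region it is drawn in. Second, one new face $T_e$ for each $e$, namely the interior of $C_e$. The adjacencies are as follows.
\begin{itemize}
\item $\tilde f$ and $\tilde g$ are adjacent exactly when $f$ and $g$ are adjacent in $D_\Gamma$, via either half of a subdivided edge.
\item $T_e$ is adjacent only to $\tilde f_e$, through the three triangle edges.
\item The bridge $s_et_e$ yields only a loop, which is ignored.
\end{itemize}
Hence $D_\Delta$ is $D_\Gamma$ with a pendant vertex $T_e$ attached to each $f_e$. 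This gives $d_\Delta(\tilde f)=d_\Gamma(f)$ for every face $f$, because pendant vertices never shorten paths. It also gives
\[
d_\Delta(T_e)=1+d_\Gamma(f_e)=1+\min_{f\in F_\Gamma(e)} d_\Gamma(f)=\lambda_\Gamma(e)
\]
by Lemma~\ref{lem:layer}.

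\textbf{Computing the depth.} It follows that
\[
\operatorname{depth}(\Delta)=\max\Bigl(\max_f d_\Gamma(f),\ \operatorname{eop}(\Gamma)\Bigr).
\]
It remains to show $d_\Gamma(f)\le \operatorname{eop}(\Gamma)$ for every face $f$. Since $G$ is connected with at least one edge, every face $f$ has an edge $e$ on its boundary. Faces adjacent in $D_\Gamma$ have depths differing by at most one, so every $g\in F_\Gamma(e)$ satisfies $d_\Gamma(g)\ge d_\Gamma(f)-1$. Therefore $\lambda_\Gamma(e)\ge d_\Gamma(f)$, which gives $\operatorname{depth}(\Delta)=\operatorname{eop}(\Gamma)$. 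Taking $\Gamma$ to be an optimal embedding of $G$ then yields $\mathrm{OPT}_{\mathrm{depth}}(H)\le\mathrm{OPT}_{\mathrm{edge}}(G)$.

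\textbf{Main obstacle.} The delicate step is rigorously justifying the description of $D_\Delta$. One must show that inserting the pendant gadget into $f_e$ neither splits $f_e$ nor creates adjacencies other than $T_e$–$\tilde f_e$. One must also handle the case where $e$ is a bridge, or where $s_e$ appears several times on the boundary of $f_e$. I would argue this with Lemma~\ref{lem:reachability}-style local reasoning around $s_e$ and $t_e$: the gadget is a connected set attached to the rest of the drawing at the single point $s_e$, and its only bounded complementary component is the triangle interior.
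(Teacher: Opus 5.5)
Your proposal is correct and follows the paper's proof: the same gadget placement in a minimum-depth incident face $f_e$, the same observation that $D_\Delta$ is $D_\Gamma$ with a pendant leaf $\tau_e$ attached at $f_e$, and the same use of Lemma~\ref{lem:layer} to get $d_\Delta(\tau_e)=\lambda_\Gamma(e)$. The only cosmetic difference is in bounding old face depths: you use an arbitrary boundary edge of $f$ together with the fact that depths of adjacent faces differ by at most one. The paper instead uses the last edge of a shortest dual path to $f$. Both give $\lambda_\Gamma(e)\ge d_\Gamma(f)$ immediately.
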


\begin{proof}
Fix a plane embedding $\Gamma$ of $G$. For every edge $e \in E(G)$, choose an incident face $f_e \in F_\Gamma(e)$ of minimum depth:
\[
d_\Gamma(f_e) = \min_{f \in F_\Gamma(e)} d_\Gamma(f).
\]
Starting from $\Gamma$, subdivide each edge $e$ by inserting $s_e$. This does not change the set of old faces or their distances from the outer face. Now draw the marker triangle $C_e$ inside the chosen face $f_e$, attached to $s_e$ by the bridge $s_e t_e$. Since there are only finitely many such attachments, they can be drawn pairwise disjointly inside their chosen faces. Call the resulting embedding of $H$ $\Delta$.

Let $\tau_e$ be the empty triangular face bounded by $C_e$. The other side of $C_e$, namely the side containing the bridge $s_e t_e$, is the old face $f_e$ modified by the attachment. We first observe that
\[
d_\Delta(f) = d_\Gamma(f) \qquad \text{for every old face } f.
\]
Indeed, the only edges of $D_\Delta$ that are not edges of $D_\Gamma$ are the edges between a face $f_e$ and its marker face $\tau_e$; since each $\tau_e$ is a leaf of $D_\Delta$, whose unique neighbor is $f_e$, these new edges create no shortcuts between old faces, and the outer face is unchanged.

Hence $\tau_e$ is a leaf face one dual step farther from $f_\infty$ than $f_e$. Therefore
\[
d_\Delta(\tau_e) = d_\Gamma(f_e) + 1 = 1 + \min_{f \in F_\Gamma(e)} d_\Gamma(f) = \lambda_\Gamma(e),
\]
where the last equality follows from Lemma~\ref{lem:layer}. Note that this uses the fact that we chose to put the triangle gadget on the side of $e$ with minimum face-depth in $\Gamma$.

It remains to check that no old face has depth larger than the maximum marker depth. Let $f$ be an old face of depth $q > 0$ in $\Gamma$. Let
\[
f_\infty = f_0, f_1, \ldots, f_q = f
\]
be a shortest path in $D_\Gamma$. The final step of this path crosses some original edge $e$ incident with $f_{q-1}$ and $f_q = f$. Thus
\[
\min_{g \in F_\Gamma(e)} d_\Gamma(g) = q - 1.
\]
By Lemma~\ref{lem:layer},
\[
\lambda_\Gamma(e) = q.
\]
So every old face-depth is witnessed by some edge-deletion layer. The only new faces in $\Delta$ are the marker faces $\tau_e$. Therefore
\[
\operatorname{depth}(\Delta) = \max_{e \in E(G)} d_\Delta(\tau_e) = \max_{e \in E(G)} \lambda_\Gamma(e) = \operatorname{eop}(\Gamma).
\]
Taking the minimum over all embeddings $\Gamma$ of $G$ gives
\[
\mathrm{OPT}_{\mathrm{depth}}(H) \le \mathrm{OPT}_{\mathrm{edge}}(G). \qedhere
\]
\end{proof}

\section{From marker depths back to edge layers}

We now prove the converse direction. The key point is that deleting marker gadgets and suppressing subdivision vertices cannot increase the relevant face distances.

\begin{lemma}[Projection of face distances]\label{lem:projection}
Let $G$ be connected and loopless, with at least one edge, and let $H = H(G)$. Let $\Delta$ be a plane embedding of $H$. Let $\Gamma$ be the plane embedding of $G$ obtained from $\Delta$ by deleting every marker triangle together with its attaching bridge, and then suppressing every subdivision vertex $s_e$.

For each face $x$ of $\Delta$, define its image $p(x)$ as follows. If $x$ is not an empty triangular marker face, then $p(x)$ is the face of $\Gamma$ containing the portion of $x$ that remains after the deletions and suppressions. If $x = \tau_e$ is an empty triangular marker face, then set
\[
p(x) = p(\tau_e) = p(\rho_e),
\]
where $\rho_e$ is the face on the bridge side of the marker triangle $C_e$.

Then, for every face $x$ of $\Delta$,
\[
d_\Gamma(p(x)) \le d_\Delta(x).
\]
\end{lemma}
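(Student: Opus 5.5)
The plan is to show that $p$ is a graph homomorphism-like map from $D_\Delta$ to $D_\Gamma$ that sends $f^\Delta_\infty$ to $f^\Gamma_\infty$ and never increases distances. Concretely, I will verify two facts: (i) $p$ maps the outer face of $\Delta$ to the outer face of $\Gamma$, and (ii) for every edge $xy$ of $D_\Delta$, either $p(x)=p(y)$ or $p(x)$ and $p(y)$ are adjacent in $D_\Gamma$. Given these, a shortest path $f_\infty = x_0,x_1,\dots,x_m = x$ in $D_\Delta$ maps to a walk $p(x_0),\dots,p(x_m)$ in $D_\Gamma$ in which consecutive terms are equal or adjacent. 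Deleting repetitions yields a walk from $f_\infty^\Gamma$ to $p(x)$ of length at most $m$, which gives $d_\Gamma(p(x))\le d_\Delta(x)$.

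For (i), note that $\Gamma$ is obtained from $\Delta$ by deleting edges and vertices and suppressing degree-two vertices. Deletion only enlarges regions, and suppression does not change regions. So the point at infinity stays in the outer region, and the portion of $f^\Delta_\infty$ that remains lies in $f^\Gamma_\infty$. One has to check that $f^\Delta_\infty$ is not itself an empty marker face $\tau_e$. If it is, its image is defined via $\rho_e$, and since $\rho_e$ is adjacent to $\tau_e$, I will handle this case separately. The region on the other side of $C_e$ contains all of $G$'s drawing, so when $C_e$ is deleted both $\tau_e$ and $\rho_e$ merge into the region that becomes the outer face of $\Gamma$. Thus $p(\tau_e)=p(\rho_e)=f^\Gamma_\infty$ still holds.

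For (ii), I classify the primal edge of $H$ shared by $x$ and $y$. If it is a marker-triangle edge, then one of the two faces is $\tau_e$ and the other is $\rho_e$, since $C_e$ separates exactly these two faces. This holds because the bridge $s_et_e$ sits outside $C_e$ relative to $\tau_e$, and a face bounded only by the triangle is empty. Here $p(x)=p(y)$ by definition. The bridge $s_et_e$ contributes only a loop, so it gives nothing to check. The remaining case is a half-edge $us_e$ or $s_ev$ of a subdivided edge $e$. After the deletions, the remnants of $x$ and $y$ still lie on the two sides of the suppressed edge $e$ of $\Gamma$. So $p(x)$ and $p(y)$ are the faces of $\Gamma$ incident with $e$, and they are either equal (when $e$ is a bridge of $\Gamma$) or adjacent in $D_\Gamma$.

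The main obstacle is making ``the face of $\Gamma$ containing the remnant of $x$'' precise and well defined. Every non-marker face must keep a nonempty remnant, and that remnant must lie in a single face of $\Gamma$. I will argue that every face $x\ne\tau_e$ of $\Delta$ has on its boundary some side of a subdivided half-edge of $G$, since $G$ is connected with at least one edge and every marker hangs off some $s_e$. A small neighbourhood of that side survives in one face of $\Gamma$. Also, the remnant of $x$ is connected, because removing the pendant gadgets from its closure does not disconnect it, so it lies in one face. Deleting the gadgets also merges $\rho_e$-type faces only with themselves, in the spirit of Lemma~\ref{lem:reachability}. Once this well-definedness is settled, (ii) reduces to the local picture at a half-edge, and the distance inequality follows as above.
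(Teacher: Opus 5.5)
Your proposal is correct and follows essentially the same route as the paper: you project a shortest dual path through $p$, check that the outer face maps to the outer face (including the case where the outer face is some $\tau_e$), and use the same three-way case split on marker-triangle edges, attaching bridges, and half-edges $us_e$, $s_ev$. Your extra well-definedness argument is harmless but more than needed: a non-marker face $x$ is itself a connected open set disjoint from the drawing of $\Gamma$, so it lies in a single face of $\Gamma$.
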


\begin{proof}
Let
\[
x_0, x_1, \ldots, x_\ell = x
\]
be a shortest path in $D_\Delta$ from the outer face $x_0$ of $\Delta$ to $x$. We show that, after deleting consecutive repetitions, the sequence
\[
p(x_0), p(x_1), \ldots, p(x_\ell)
\]
contains a walk in $D_\Gamma$ from the outer face of $\Gamma$ to $p(x)$.

First consider one step $x_i x_{i+1}$ of the dual path. The faces $x_i$ and $x_{i+1}$ share some primal edge $\alpha$ of $H$. Every edge of $H$ is of one of three kinds: an edge of a marker triangle, one of the two halves $u s_e$ or $s_e v$ of a subdivided edge, or an attaching bridge $s_e t_e$. We consider the three kinds in turn.

If $\alpha$ is an edge of a marker triangle, then the two faces incident with $\alpha$ are the empty marker face and the bridge-side face of that marker. These two faces have the same image under $p$, by definition.

If $\alpha$ is one of the two subdivided edges $u s_e$ or $s_e v$, then after suppressing $s_e$ this crossing becomes either a crossing of the restored edge $uv$ in $\Gamma$ or a repetition if the two sides have merged. Thus $p(x_i)$ and $p(x_{i+1})$ are equal or adjacent in $D_\Gamma$.

If $\alpha$ is an attaching bridge $s_e t_e$, then $\alpha$ is a bridge of $H$, so it does not give an edge between two distinct faces in $D_\Delta$. Hence this case cannot occur as a step of the dual path between distinct faces.

Therefore each step of the dual path in $D_\Delta$ maps to either a step or a repetition in $D_\Gamma$.

It remains to identify the starting face. If the outer face $x_0$ of $\Delta$ is not an empty triangular marker face, then $p(x_0)$ is the outer face of $\Gamma$. If $x_0 = \tau_e$ is an empty triangular marker face, then deleting that outer marker triangle opens its bridge-side face $\rho_e$ directly to the exterior, so $p(x_0) = p(\rho_e)$ is again the outer face of $\Gamma$.

Thus the projected sequence gives a walk from the outer face of $\Gamma$ to $p(x)$ of length at most $\ell = d_\Delta(x)$. Hence
\[
d_\Gamma(p(x)) \le d_\Delta(x). \qedhere
\]
\end{proof}

\begin{lemma}\label{lem:backward}
Let $G$ be connected and loopless, with at least one edge, and let $H = H(G)$. For every plane embedding $\Delta$ of $H$, if $\Gamma$ is the embedding of $G$ obtained by deleting all marker gadgets and suppressing all subdivision vertices, then
\[
\operatorname{eop}(\Gamma) \le \operatorname{depth}(\Delta).
\]
Consequently,
\[
\mathrm{OPT}_{\mathrm{edge}}(G) \le \mathrm{OPT}_{\mathrm{depth}}(H).
\]
\end{lemma}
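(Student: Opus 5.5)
Fix a plane embedding $\Delta$ of $H$ and let $\Gamma$ be the projected embedding of $G$ described in Lemma~\ref{lem:projection}. By Lemma~\ref{lem:layer}, it suffices to show that for every edge $e \in E(G)$ there is a face $g \in F_\Gamma(e)$ with $d_\Gamma(g) \le \operatorname{depth}(\Delta) - 1$. Then $\lambda_\Gamma(e) \le \operatorname{depth}(\Delta)$ for every $e$, and hence $\operatorname{eop}(\Gamma) \le \operatorname{depth}(\Delta)$. Taking $\Delta$ to be a minimum-depth embedding of $H$ then gives $\mathrm{OPT}_{\mathrm{edge}}(G) \le \mathrm{OPT}_{\mathrm{depth}}(H)$.

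Fix $e = uv$ and consider the marker triangle $C_e$ in $\Delta$. The triangle $C_e$ separates the plane into two sides. The attaching bridge $s_e t_e$ lies on one side, in the face $\rho_e$ incident with the bridge. Let $\tau$ be the face on the other side of the triangle edge $a_e b_e$. Because $C_e$ is a cycle, the faces $\tau$ and $\rho_e$ are distinct and adjacent in $D_\Delta$. I first argue that $\rho_e$ is incident with one of the half-edges $us_e$ or $s_e v$. The face $\rho_e$ contains the bridge $s_e t_e$ and hence the corner at $s_e$ around it. In the rotation at $s_e$, the bridge is flanked on both sides by $us_e$ and $s_e v$, since $s_e$ has degree $3$. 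So the face $\rho_e$ is incident with the half-edges of $e$. After suppression, $p(\rho_e)$ is therefore a face of $\Gamma$ incident with $e$; that is, $p(\rho_e) \in F_\Gamma(e)$.

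Next I bound $d_\Gamma(p(\rho_e))$ by $\operatorname{depth}(\Delta) - 1$. If $d_\Delta(\rho_e) \le \operatorname{depth}(\Delta) - 1$, Lemma~\ref{lem:projection} finishes this case. Otherwise $d_\Delta(\rho_e) = \operatorname{depth}(\Delta)$, so its neighbour $\tau$ has depth $\operatorname{depth}(\Delta) - 1$ or less. I would then split into two subcases.
\begin{itemize}
\item[(i)] If $\tau$ is the empty marker face $\tau_e$, then $p(\tau_e) = p(\rho_e)$ by definition, and Lemma~\ref{lem:projection} applied to $\tau_e$ gives the bound.
\item[(ii)] If $\tau$ is a larger face on the far side of $C_e$ (other parts of $H$ may have been drawn inside the triangle), I would argue as follows. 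Every vertex of $H$ outside the triangle's closed side connects to the rest of $H$ only through $t_e$ and the bridge, which must leave through $t_e$'s corner into $\rho_e$. Hence everything of $H$ drawn inside the triangle's non-bridge side is attached at $t_e$, $a_e$ or $b_e$, and $a_e$, $b_e$ have degree $2$. So the region bounded by $C_e$ on the non-bridge side cannot contain other parts of $H$ at all. That side is then an empty triangular face, which reduces (ii) to (i).
\end{itemize}
One must also handle the situation where the outer face lies inside the triangle; this is the same as $\tau_e$ being the outer face, which Lemma~\ref{lem:projection} already treats.

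The main obstacle is this topological bookkeeping: showing that the side of $C_e$ away from the bridge is always the empty face $\tau_e$, and that $\rho_e$ is the face incident with the subdivided edge. Connectivity of $H$, together with $\deg a_e = \deg b_e = 2$ and $\deg t_e = 3$, should make both rigorous. Once these are settled, combining $p(\tau_e) = p(\rho_e) \in F_\Gamma(e)$ with Lemma~\ref{lem:projection} and Lemma~\ref{lem:layer} closes the proof.
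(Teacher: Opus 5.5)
\textbf{There is a gap in the depth step.} Your overall route is the paper's: isolate the empty face $\tau_e$ and the bridge-side face $\rho_e$, note $p(\rho_e)\in F_\Gamma(e)$, bound $d_\Gamma(p(\rho_e))$ with Lemma~\ref{lem:projection}, and finish with Lemma~\ref{lem:layer}. Your topological arguments are fine and match the paper. These are that the non-bridge side of $C_e$ is empty, and that $\rho_e$ meets both half-edges $us_e$ and $s_ev$ at the degree-$3$ vertex $s_e$.

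The problem is this inference: from $d_\Delta(\rho_e)=h$, where $h=\operatorname{depth}(\Delta)$, you conclude that the neighbour $\tau$ has depth at most $h-1$. That does not follow. Adjacent faces can have equal depth; for example, two inner faces of a wheel that share a spoke are both at depth $1$. So a priori $d_\Delta(\tau)=h$ is possible. Subcase (i) relies on exactly this unproved bound when it applies Lemma~\ref{lem:projection} to $\tau_e$.

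\textbf{The missing idea is that $\tau_e$ is a leaf of $D_\Delta$.} All three edges of $C_e$ have $\rho_e$ on their other side. The region just outside $C_e$ passes around the degree-$2$ vertices $a_e,b_e$ without meeting anything, and at $t_e$ it is the face containing the bridge. Hence $\rho_e$ is the only neighbour of $\tau_e$. This settles both cases:
\begin{itemize}
\item If $\tau_e$ is not the outer face, every shortest path to $\tau_e$ ends with the step $\rho_e\tau_e$. So $d_\Delta(\rho_e)=d_\Delta(\tau_e)-1\le h-1$, and your ``otherwise'' case never occurs.
\item If $\tau_e$ is the outer face, then $p(\rho_e)=p(\tau_e)$ is the outer face of $\Gamma$. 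Also $h\ge d_\Delta(\rho_e)\ge 1$ because $\rho_e\neq\tau_e$.
\end{itemize}
With this one observation added, your proof coincides with the paper's, which splits precisely on whether $\tau_e$ is the outer face and uses this leaf property. So the delicate point is this depth comparison, not the topological bookkeeping you identified as the main obstacle.
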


\begin{proof}
Let
\[
h = \operatorname{depth}(\Delta).
\]
Fix an original edge $e = uv \in E(G)$. Its marker triangle is
\[
C_e = t_e a_e b_e t_e.
\]
Since $C_e$ is a simple cycle in the plane, it has two sides. Because the rest of $H$ is attached to $C_e$ only at the vertex $t_e$ through the bridge $s_e t_e$, one side of $C_e$ contains no other part of the graph. This side is an empty triangular face; call it $\tau_e$. Let $\rho_e$ be the face on the other side of $C_e$, the side containing the bridge $s_e t_e$.

There are two cases. First suppose $\tau_e$ is not the outer face of $\Delta$. Then $\tau_e$ is a leaf in the face-adjacency graph $D_\Delta$, and its unique neighbor is $\rho_e$. Hence
\[
d_\Delta(\rho_e) = d_\Delta(\tau_e) - 1 \le h - 1.
\]
By Lemma~\ref{lem:projection},
\[
d_\Gamma(p(\rho_e)) \le d_\Delta(\rho_e) \le h - 1.
\]
After deleting the marker gadget, the face $p(\rho_e)$ is incident with the subdivided path
\[
u - s_e - v.
\]
After suppressing $s_e$, this same face is incident with the restored edge $uv$. Therefore the edge $uv$ has an incident face in $\Gamma$ of depth at most $h - 1$. By Lemma~\ref{lem:layer} we immediately get
\[
\lambda_\Gamma(uv) \le h.
\]

Now suppose $\tau_e$ is the outer face of $\Delta$. Deleting this outer marker triangle opens the bridge-side face $\rho_e$ directly to the exterior. After suppressing $s_e$, the restored edge $uv$ is incident with the outer face of $\Gamma$. Therefore
\[
\lambda_\Gamma(uv) = 1 \le h.
\]
Here $h \ge 1$, because $H$ has at least one marker triangle and hence at least two faces, and the face-adjacency graph of a plane graph is connected, so some face has positive depth.

Thus $\lambda_\Gamma(e) \le h$ for every original edge $e \in E(G)$. Therefore
\[
\operatorname{eop}(\Gamma) \le h = \operatorname{depth}(\Delta).
\]
Taking the minimum over all embeddings $\Delta$ of $H$ gives
\[
\mathrm{OPT}_{\mathrm{edge}}(G) \le \mathrm{OPT}_{\mathrm{depth}}(H). \qedhere
\]
\end{proof}

\begin{theorem}[Equality of optima]\label{thm:equality}
Let $G$ be a connected loopless planar graph with at least one edge, and let $H = H(G)$ be the auxiliary graph constructed above. Then
\[
\mathrm{OPT}_{\mathrm{edge}}(G) = \mathrm{OPT}_{\mathrm{depth}}(H).
\]
\end{theorem}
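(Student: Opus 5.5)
The equality follows by combining the two inequalities already established; Theorem~\ref{thm:equality} is essentially a corollary of Lemma~\ref{lem:forward} and Lemma~\ref{lem:backward}. The plan is to verify that their hypotheses are exactly those of the theorem, and that both optima are genuine minima over nonempty finite sets of values.

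First, I check that both quantities are well defined. The graph $G$ is planar, so it has at least one plane embedding. Its combinatorial embeddings, together with the choice of outer face, form a finite set. Hence $\mathrm{OPT}_{\mathrm{edge}}(G)$ is a minimum of finitely many nonnegative integers, and each value is finite by the remark in the preliminaries that edge peeling terminates. The graph $H$ is planar by the construction in the auxiliary-graph section, so $\mathrm{OPT}_{\mathrm{depth}}(H)$ is likewise a finite minimum. The finiteness matters because otherwise the minima could fail to be attained. Each value depends only on the combinatorial embedding and the outer face, since the face-adjacency graph and the peeling process depend only on these.

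Second, I apply Lemma~\ref{lem:forward} to an embedding $\Gamma$ of $G$ attaining $\mathrm{OPT}_{\mathrm{edge}}(G)$. This produces an embedding $\Delta$ of $H$ with $\operatorname{depth}(\Delta)=\operatorname{eop}(\Gamma)$, which gives $\mathrm{OPT}_{\mathrm{depth}}(H)\le \mathrm{OPT}_{\mathrm{edge}}(G)$.

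Third, I apply Lemma~\ref{lem:backward} to an embedding $\Delta$ of $H$ attaining $\mathrm{OPT}_{\mathrm{depth}}(H)$. This yields an embedding $\Gamma$ of $G$ with $\operatorname{eop}(\Gamma)\le\operatorname{depth}(\Delta)$, which gives the reverse inequality. Both lemmas assume exactly that $G$ is connected, loopless, and has at least one edge, so the combination is immediate.

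I do not expect a real obstacle, since the substantive work is done in the lemmas. The only point needing care is the backward step. There I must confirm that deleting the gadgets and suppressing the vertices $s_e$ really yields a plane embedding of $G$ itself, with parallel edges restored correctly and no loops created. This holds because $G$ is loopless, so $u\neq v$ for every edge $uv$. If needed, I would spell this out briefly.
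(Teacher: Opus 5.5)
Your proposal is correct and follows the paper's proof: the theorem is just the combination of Lemma~\ref{lem:forward}, which gives $\mathrm{OPT}_{\mathrm{depth}}(H)\le\mathrm{OPT}_{\mathrm{edge}}(G)$, and Lemma~\ref{lem:backward}, which gives the reverse inequality. Your extra remarks, that the minima are attained over finitely many combinatorial embeddings and that suppressing the vertices $s_e$ creates no loops, are harmless additions that the paper leaves implicit.
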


\begin{proof}
Lemma~\ref{lem:forward} gives
\[
\mathrm{OPT}_{\mathrm{depth}}(H) \le \mathrm{OPT}_{\mathrm{edge}}(G).
\]
Lemma~\ref{lem:backward} gives
\[
\mathrm{OPT}_{\mathrm{edge}}(G) \le \mathrm{OPT}_{\mathrm{depth}}(H).
\]
Therefore the two quantities are equal.
\end{proof}

\section{Disconnected graphs}

We now remove the connectedness assumption.

\begin{lemma}\label{lem:components}
Let $G$ be a loopless planar graph with connected components
\[
G_1, \ldots, G_c.
\]
Then
\[
\mathrm{OPT}_{\mathrm{edge}}(G) = \max_{1 \le i \le c} \mathrm{OPT}_{\mathrm{edge}}(G_i).
\]
Isolated vertices have value $0$.
\end{lemma}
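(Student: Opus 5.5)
We prove the two inequalities separately. Throughout we use Lemma~\ref{lem:layer}: in any embedding, the peeling round of an edge equals $1$ plus the smallest depth of a face incident with it. We also use that the face-adjacency graph of any plane graph is connected. An isolated vertex has no edges, so its value is $0$ by the convention $\operatorname{eop}=0$ for edgeless graphs. Such components never affect the maximum, since all values are nonnegative.

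\emph{Lower bound} $\mathrm{OPT}_{\mathrm{edge}}(G) \ge \max_i \mathrm{OPT}_{\mathrm{edge}}(G_i)$. Fix an embedding $\Gamma$ of $G$ and a component $G_i$, and let $\Gamma_i$ be the restriction of $\Gamma$ to $G_i$. We claim $\lambda_{\Gamma_i}(e) \le \lambda_\Gamma(e)$ for every edge $e$ of $G_i$. Deleting edges from a plane graph can only merge faces. Take a shortest dual path in $D_\Gamma$ from $f_\infty$ to a face $f$. Map each face of $\Gamma$ to the face of $\Gamma_i$ containing it, and observe that every dual step either crosses an edge of $G_i$ or stays within one face of $\Gamma_i$. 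This mirrors the projection argument of Lemma~\ref{lem:projection}, here with the deleted parts being the other components. The image path therefore shows $d_{\Gamma_i}(p(f)) \le d_\Gamma(f)$, and the outer face maps to the outer face. An edge $e$ of $G_i$ incident with $f$ in $\Gamma$ is incident with $p(f)$ in $\Gamma_i$. Applying Lemma~\ref{lem:layer} in both embeddings gives the claim, hence $\operatorname{eop}(\Gamma_i)\le \operatorname{eop}(\Gamma)$. Minimizing over $\Gamma$ yields the bound.

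\emph{Upper bound.} For each $i$, pick an optimal embedding $\Gamma_i$ of $G_i$. Place these embeddings side by side in disjoint discs, each lying in the common unbounded region, so that no component is drawn inside a bounded face of another. In the resulting embedding $\Gamma$, the outer face is the common exterior, and every bounded face is a bounded face of exactly one $\Gamma_i$. A dual path in $D_\Gamma$ starting from the outer face can enter a bounded face of $\Gamma_i$ only by crossing edges of $G_i$. Thus $D_\Gamma$ is obtained from the graphs $D_{\Gamma_i}$ by identifying their outer faces, and $d_\Gamma(f)=d_{\Gamma_i}(f)$ for each bounded face $f$ of $\Gamma_i$. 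The incident faces of each edge correspond exactly. By Lemma~\ref{lem:layer}, $\lambda_\Gamma(e)=\lambda_{\Gamma_i}(e)$, so $\operatorname{eop}(\Gamma)=\max_i \operatorname{eop}(\Gamma_i)$.

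The main obstacle is the lower bound. Other components may sit inside faces of $G_i$, or $G_i$ may sit inside a face of another component, so depths must be compared through face merging rather than by identity. We handle this by a Lemma~\ref{lem:reachability}/Lemma~\ref{lem:projection}-style projection argument. The point to check carefully is that crossing an edge of another component never corresponds to crossing an edge of $G_i$. Such crossings are repetitions in the image, since both sides lie in the same face of $\Gamma_i$.
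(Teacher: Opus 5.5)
Your proposal is correct and follows the paper's proof: a side-by-side placement of optimal component embeddings for the upper bound, and for the lower bound a projection of faces of $\Gamma$ onto faces of the restriction $\Gamma_i$, where crossings of other components' edges become repetitions; both bounds are then combined with Lemma~\ref{lem:layer}. The only difference is minor: you show the equality $d_\Gamma(f)=d_{\Gamma_i}(f)$ in the side-by-side embedding (valid, since $D_\Gamma$ is the $D_{\Gamma_i}$ glued at the outer face), whereas the paper proves and needs only $d_\Gamma(f)\le d_{\Gamma_i}(f)$.
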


\begin{proof}
For the upper bound, take an optimal embedding $\Gamma_i$ of each component $G_i$, and draw all components side by side, so that their outer faces lie in the common outer face of the resulting embedding $\Gamma$ of $G$. The faces of $\Gamma$ are the internal faces of the embeddings $\Gamma_i$, together with one common outer face $f_\infty$ obtained by merging the outer faces of the $\Gamma_i$. Any path in $D_{\Gamma_i}$ from the outer face of $\Gamma_i$ to an internal face $f$ of $\Gamma_i$ is also a path in $D_\Gamma$ from $f_\infty$ to $f$, so
\[
d_\Gamma(f) \le d_{\Gamma_i}(f) \qquad \text{for every internal face } f \text{ of } \Gamma_i.
\]
By Lemma~\ref{lem:layer}, applied to $\Gamma$ and to each $\Gamma_i$, every edge $e$ of $G_i$ satisfies
\[
\lambda_\Gamma(e) = 1 + \min_{f \in F_\Gamma(e)} d_\Gamma(f) \le 1 + \min_{f \in F_{\Gamma_i}(e)} d_{\Gamma_i}(f) = \lambda_{\Gamma_i}(e).
\]
Hence
\[
\operatorname{eop}(\Gamma) \le \max_i \operatorname{eop}(\Gamma_i) = \max_i \mathrm{OPT}_{\mathrm{edge}}(G_i).
\]

For the lower bound, let $\Gamma$ be any embedding of $G$. Fix a component $G_i$, and delete all other components from the drawing; let $\Gamma_i'$ denote the induced embedding of $G_i$. This operation cannot increase face-distances for the remaining component: any path in $D_\Gamma$ from $f_\infty$ to a face projects to a walk in $D_{\Gamma_i'}$, because a step across an edge of $G_i$ remains a step or becomes a repetition, while a step across an edge of another component becomes a repetition, as the two faces flanking that edge are merged when it is deleted; moreover, the outer face of $\Gamma$ is contained in the outer face of $\Gamma_i'$. Hence, for every edge $e$ of $G_i$, the incident faces of $e$ in $\Gamma_i'$ are the images of its incident faces in $\Gamma$, and their depths in $\Gamma_i'$ are at most the corresponding depths in $\Gamma$. By Lemma~\ref{lem:layer}, applied to $\Gamma_i'$ and to $\Gamma$,
\[
\lambda_{\Gamma_i'}(e) \le \lambda_\Gamma(e) \le \operatorname{eop}(\Gamma) \qquad \text{for every } e \in E(G_i),
\]
so $\operatorname{eop}(\Gamma_i') \le \operatorname{eop}(\Gamma)$, and therefore
\[
\mathrm{OPT}_{\mathrm{edge}}(G_i) \le \operatorname{eop}(\Gamma)
\]
for every $i$. Taking the maximum over $i$, and then minimizing over all embeddings $\Gamma$ of $G$, gives
\[
\max_i \mathrm{OPT}_{\mathrm{edge}}(G_i) \le \mathrm{OPT}_{\mathrm{edge}}(G).
\]
Combining the two inequalities proves the claim.
\end{proof}

\section{Polynomial-time algorithm}

We now prove the main theorem.

\begin{proof}[Proof of Theorem~\ref{thm:main}]
If $E(G) = \emptyset$, return any embedding of $G$ and the value $k = 0$. Otherwise, compute the connected components of $G$. By Lemma~\ref{lem:components}, it suffices to solve each nontrivial connected component independently and take the maximum of the resulting values.

Thus assume first that $G$ is connected and has at least one edge. Construct the auxiliary graph $H = H(G)$. This construction is linear:
\[
|V(H)| = |V(G)| + 4|E(G)|, \qquad |E(H)| = 6|E(G)|.
\]
Run the polynomial-time minimum-depth embedding algorithm of Theorem~\ref{thm:adp} on $H$. Let $\Delta^\star$ be a minimum-depth embedding of $H$. From $\Delta^\star$, delete every marker triangle together with its attaching bridge, and suppress every subdivision vertex $s_e$. Let $\Gamma^\star$ be the resulting embedding of $G$.

By Lemma~\ref{lem:backward},
\[
\operatorname{eop}(\Gamma^\star) \le \operatorname{depth}(\Delta^\star).
\]
Since $\Delta^\star$ is a minimum-depth embedding of $H$,
\[
\operatorname{depth}(\Delta^\star) = \mathrm{OPT}_{\mathrm{depth}}(H).
\]
By Theorem~\ref{thm:equality},
\[
\mathrm{OPT}_{\mathrm{depth}}(H) = \mathrm{OPT}_{\mathrm{edge}}(G).
\]
Therefore
\[
\operatorname{eop}(\Gamma^\star) \le \mathrm{OPT}_{\mathrm{edge}}(G).
\]
The reverse inequality holds by the definition of $\mathrm{OPT}_{\mathrm{edge}}(G)$. Hence
\[
\operatorname{eop}(\Gamma^\star) = \mathrm{OPT}_{\mathrm{edge}}(G).
\]
So $\Gamma^\star$ is an embedding of $G$ with minimum possible edge-outerplanarity.

For disconnected $G$, apply the same procedure to each nontrivial connected component and place the resulting component embeddings side by side in a common outer face. By Lemma~\ref{lem:components}, this gives an optimal embedding of $G$.

The running time is polynomial because $H$ has size linear in $|V(G)| + |E(G)|$, and the minimum-depth embedding subroutine runs in polynomial time. Using Theorem~\ref{thm:adp} directly gives an $O(|V(H)|^4)$-time subroutine call for each connected component, hence polynomial time overall.
\end{proof}


\begin{thebibliography}{9}

\bibitem{ADP11}
P.~Angelini, G.~Di Battista, and M.~Patrignani.
Finding a minimum-depth embedding of a planar graph in $O(n^4)$ time.
\emph{Algorithmica}, 60(4):890--937, 2011.
\url{https://doi.org/10.1007/s00453-009-9380-6}.

\bibitem{Bentz09}
C.~Bentz.
Disjoint paths in sparse graphs.
\emph{Discrete Applied Mathematics}, 157(17):3558--3568, 2009.

\bibitem{BM90}
D.~Bienstock and C.~L. Monma.
On the complexity of embedding planar graphs to minimize certain distance measures.
\emph{Algorithmica}, 5:93--109, 1990.
\url{https://doi.org/10.1007/BF01840379}.

\end{thebibliography}
\end{document}